\def\BState{\State\hskip-\ALG@thistlm}
\newtheorem{theorem}{Theorem}[section]
\theoremstyle{definition}
\newtheorem{defn}{Definition}[section]
\newtheorem{lem}{Lemma}[section]
\newtheorem{exmp}{Example}[section]
\title{Majorisation-minimisation algorithms for minimising the difference between lattice submodular functions}
\author{
	Conor McMeel\\
	\texttt{c.mcmeel18@imperial.ac.uk}
	\and
	Panos Parpas\\
	\texttt{panos.parpas@imperial.ac.uk}
}
\begin{document}
	\maketitle
\begin{abstract}
	We consider the problem of minimising functions represented as a difference of lattice submodular functions. We propose analogues to the SupSub, SubSup and ModMod routines for lattice submodular functions. We show that our majorisation-minimisation algorithms produce iterates that monotonically decrease, and that we converge to a local minimum. We also extend additive hardness results, and show that a broad range of functions can be expressed as the difference of submodular functions.
\end{abstract}

\section{Introduction}
In discrete optimisation, many objectives are expressible as submodular minimisation or maximisation problems. This is useful, as submodular functions can be minimised in strongly polynomial time \cite{iwata2001combinatorial}, and have strong constant approximation factors under a variety of constraints for maximisation \cite{lee2009non} \cite{buchbinder2014submodular}. Applications for these problems have included medical imaging \cite{hoi2006batch}, document summarisation \cite{lin2012learning} and online advertising \cite{alaei2010maximizing}.

We refer to $V = \{1, \ldots, n\}$ as the ground set, and a function $f: 2^V \rightarrow \mathbb{R}$ is then called submodular if the following inequality holds for all $A, B \subseteq V$:
\begin{equation}
	f(A) + f(B) \geq f(A \cup B) + f(A \cap B).
\end{equation}
We note that a subset $A$ can also be represented as a $0-1$ vector $x$, where a component takes on the value $1$ if the element is present in the subset $A$. In this way our inequality can be written as
\begin{equation}
f(x) + f(y) \geq f(\min(x,y)) + f(\max(x,y))
\end{equation}
for all $x, y \in \{0,1\}^n$. A popular and often-studied application of submodular optimisation which we shall use as our running example is that of sensor placement. Suppose we wish to detect the level of air pollution in a town. We have a finite set of locations that we can install air pollution sensors in, but we are constrained by some budget. Knowing the geography of our town, we wish to select the set of locations such that the information we receive is maximised. As shown in \cite{radanovic2015incentive}, this problem is submodular.

In the classic $0-1$ case, this corresponds to only having one type of sensor. However, we can suppose that we have a number of different types of sensors available, each with a strength level that can be parameterised by an integer variable $x \in \{0, \ldots, k\}$. Lattice submodularity is then determined by satisfying Equation $(2)$, but for all $x, y \in \{0, \ldots, k\}^V$. Note that in general, we can allow $k$ to be different for different elements.

Submodular functions in the set case also lend themselves to a useful diminishing returns property, which is often considered as an equivalent definition.
\begin{defn}[Diminishing returns property]
	A submodular set function can also be defined by the following. For any $A \subset B \subset V$ and $e \notin A$, we have
	\begin{equation}
		f(A \cup e) - f(A) \geq f(B \cup e) - f(B).
	\end{equation}
\end{defn}
However, for lattice functions we unfortunately do not get that for free \cite{soma2018maximizing}, so we define a proper subclass known as DR-submodular functions as follows:
\begin{defn}[DR-Submodular]
	Let $f$ be a lattice submodular function on $\prod_i \{0, \ldots, k\}^V$. Then $f$ is called a DR-submodular function if for all $x, y \in \prod_i \{0, \ldots, k\}^V$ with $x \leq y$ componentwise, $c > 0$ and basis vector $e_j$, we have
	\begin{equation}
		f(x + ce_j) - f(x) \geq f(y + ce_j) - f(y).
	\end{equation}
\end{defn}
As shown in \cite{bach2016submodular}, many typical objectives we come across are in fact DR-submodular, so it makes sense to consider this subclass. The specific problem of optimising DR-submodular functions has been studied before \cite{bian2017continuous} \cite{soma2018maximizing}. Across these papers, the problem of maximisation has been studied in monotone and non-monotone settings across a variety of constraints. 

In this paper, we will carry out our majorisation-minimisation algorithm by discretising the lattice submodular functions, giving us submodular functions on some lattice. In \cite{ene2016reduction}, the authors showed that DR-submodular functions on a lattice can be reduced to the set function submodular case, where the ground set will be smaller. Due to this, we do not consider the specific DR subclass here, as it would be more efficient to carry out this reduction and then use the work of \cite{iyer2012algorithms}.

To motivate the problem of considering the difference of submodular functions, we turn back to the sensor information problem. Here, we may wish to also factor in the costs of sensors also. As is typical in real-world applications, we expect to get a bulk discount if we buy many sensors. Additionally, we may expect that for any one sensor, as the sensor gets stronger the unit price increases slower. This leads us to the diminishing returns property for submodularity. Specifically, if we obtain information $f(x)$, and spend $c(x)$ to get that information, for some sensor placement $x$, we wish to minimise $f(x) - \lambda c(x)$, for some tradeoff parameter $\lambda$. This formulation also applies to any other problem of optimising a submodular function with some cost.

In the set function case, this problem was considered by \cite{iyer2012algorithms}, who proposed several different majorisation minimisation techniques. Here, we will use the work of \cite{bach2016submodular} to come up with our algorithm, who worked on discretised versions of continuous submodular function, hence lattice submodular functions. We will derive majorisation-minimisation algorithms on the difference $v(x) = f(x) - g(x)$. It can be considered the extension to the lattice case of the SupSub procedure in \cite{iyer2012algorithms}, originally introduced in \cite{narasimhan2012submodular}, as well as the SubSup and ModMod procedures. In fact by discretising continuous functions, we can also consider it an extension to the continuous domain much like \cite{bach2016submodular}. We will show that we converge to a local minimum of the function, and consider the class of functions we can represent as the difference of submodular functions.

\subsection{Outline}
We begin by discussing some preliminaries Section $2$. We consider the theory of continuous and lattice submodular functions, and show how subgradients are obtained. We show modular and tight lower and upper bounds, along with a decomposition that allows us to utilise the theory of DR-submodular functions effectively. Section $3$ presents and discusses the three algorithms, along with a discussion of a stopping criterion. In Section $4$, we discuss complexity, deriving the functions $f, g$ for the representation $v(x) = f(x)-g(x)$, along with other theoretical issues. We conclude with a brief discussion in Section $5$.

\section{Preliminaries}
We are considering the problem of minimising the difference between lattice submodular functions. In particular, we seek to minimise
\begin{equation}
v(x) = f(x) - g(x)
\end{equation}
where $f, g$ are lattice submodular. For our algorithms, our plan will be to in fact use an extension of the functions $f$ that was defined in \cite{bach2016submodular}. While doing this, we will also give the main ingredient for computing the lower bound.

\subsection{An Extension}

Our extension will be from the set $\prod_i \{0, \ldots, k_i -1\} = \prod_i \mathcal{X}_i$ to the product of measures of $\mathcal{X}_i$. The motivation for this can be thought of as follows. In the Lovasz extension, the value $x \in [0,1]$ taken on by a component can be thought of as the probability that an element $x$ is included in a set. Similarly, we will extend here to probability mass functions over the domain $\{0, \ldots , k_i-1\}$.

We note that any PMF on this domain can be represented by its reverse cumulative probability density function $\rho_i(x_i) = \mu_i(x_i) + \ldots + \mu_i(k_i-1) = F_{\mu_i}(x_i)$. We then see that $\rho_i(0) = 1$ always, and the only constraint on it is that its elements are non-decreasing, and all belong to $[0,1]$. As in \cite{bach2016submodular}, we denote this set of vectors as $[0,1]^{k_i-1}_{\downarrow}$.

Next, we think of this set of vectors just defined, and note that if we mark the locations of $\rho_i(x_i)$, we will divide the number line into $k_i$ segments. This induces a map $\theta(\rho_i, t)$ from $\mathbb{R}$ to ${0,\ldots, k_i-1}$ with $\theta(\rho_i,t) = k_i-1$ if $t \leq \rho_i(k_i-1)$, $\theta(\rho_i,t) = x_i$ if $t \in (\rho_i(x_{i+1}, \rho_i(x_i)]$ for $x_i \in {1, \ldots, k_i-2}$, and $\theta(\rho_i,t) = 0$ for $t \geq \rho_i(1)$.

Our extension is then thus defined, for a submodular function $f$

\begin{equation}
f_\downarrow(\rho) = \int_0^1 f(\theta(\rho_1,t), \ldots, \theta(\rho_n,t)) dt.
\end{equation}
For a more thorough exposition, see \cite{bach2016submodular}. We now detail how the extension is evaluated using a greedy algorithm.

\subsection{A lower bound}

We now give the greedy algorithm from \cite{bach2016submodular}.

\begin{theorem}
	Consider the extension of $f_\downarrow(\rho)$ of a submodular function $f$. Order all values of $\rho$ in decreasing order, breaking ties arbitrarily but ensuring that all components for a given $\rho_i$ are in the correct order. Assume the value at position $s$ is equal to $t(s)$ and corresponds to $\rho_{i(s)}(j(s))$. Then we have the following expression for $f_\downarrow$
	
	\begin{equation}
		f_\downarrow(\rho) = f(0) + \sum_{i=1}^r t(s)(f(y(s)) - f(y(s-1)))
	\end{equation}
	which we write in the form $f_\downarrow(\rho) = f(0) + \sum_{i=1}^n \sum_{x_i=1}^{k_i-1} w_i(x_i)\rho_i(x_i)$, where $w_i(x_i)$ corresponds to the difference in value between two function evaluations whose arguments differ by a single basis vector.
\end{theorem}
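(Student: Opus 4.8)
The plan is to observe that the theorem is, at heart, a statement about evaluating the integral defining $f_\downarrow(\rho)$, and that submodularity plays no role in the identity itself: it is simply the integral of a step function, valid for any $f$. The key point is that for fixed $t$ the vector $(\theta(\rho_1,t),\ldots,\theta(\rho_n,t))$ is a single lattice point, and as $t$ ranges over $[0,1]$ this point is piecewise constant in $t$, jumping only when $t$ crosses one of the threshold values $\rho_i(x_i)$.

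First I would order the full collection of thresholds $\{\rho_i(x_i) : 1 \le i \le n,\ 1 \le x_i \le k_i-1\}$ decreasingly as $1 = t(0) \ge t(1) \ge \cdots \ge t(r) \ge t(r+1) = 0$ with $r = \sum_i (k_i-1)$, breaking ties so that for each fixed $i$ the values $\rho_i(1) \ge \rho_i(2) \ge \cdots$ appear in that order. From the definition of $\theta$ one checks that on each half-open interval $t \in (t(s+1), t(s)]$ the argument vector is a fixed lattice point $y(s)$, with $y(0) = 0$ (the bottom element, since $\theta(\rho_i,t)=0$ once $t$ exceeds every threshold of component $i$), $y(r)$ the top element, and that crossing $t(s) = \rho_{i(s)}(j(s))$ downward increments exactly the $i(s)$-th coordinate by one, so $y(s) = y(s-1) + e_{i(s)}$.

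Given this, the integral collapses to a finite sum over the constant pieces,
\begin{equation}
f_\downarrow(\rho) = \sum_{s=0}^{r} \bigl(t(s) - t(s+1)\bigr) f(y(s)),
\end{equation}
each term being the length of an interval times the value of $f$ there. I would then apply summation by parts: splitting the sum, reindexing the second piece by $s \mapsto s+1$, and peeling off the boundary terms $t(0)f(y(0)) = f(0)$ and $t(r+1)f(y(r)) = 0$ collapses the expression to
\begin{equation}
f_\downarrow(\rho) = f(0) + \sum_{s=1}^{r} t(s)\bigl(f(y(s)) - f(y(s-1))\bigr),
\end{equation}
which is exactly the claimed greedy formula. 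Finally, since each index $s$ corresponds to a unique pair $(i(s),j(s))$ with $t(s) = \rho_{i(s)}(j(s))$, and $f(y(s)) - f(y(s-1))$ is a difference of $f$ at two points differing by the single basis vector $e_{i(s)}$, relabelling $w_i(x_i) \defeq f(y(s)) - f(y(s-1))$ for the position $s$ of $\rho_i(x_i)$ rewrites the sum as $\sum_{i=1}^n \sum_{x_i=1}^{k_i-1} w_i(x_i)\rho_i(x_i)$.

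The main obstacle is the bookkeeping in the middle step: verifying rigorously that $\theta$ produces a constant lattice point on each interval and that consecutive points differ by exactly one basis vector increment, while correctly handling ties (several $\rho_i(x_i)$ equal) and degenerate thresholds equal to $0$ or $1$. The tie-breaking rule in the hypothesis is precisely what guarantees that the walk $y(0), y(1), \ldots, y(r)$ is a monotone chain in the lattice, incrementing each component $i$ through the values $1, 2, \ldots, k_i-1$ in order; once this is nailed down, the summation-by-parts computation is routine.
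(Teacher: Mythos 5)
Your proof is correct: the paper itself states this theorem without proof, importing it directly from the cited work of Bach, and your argument---partitioning $[0,1]$ into the intervals of constancy of $t \mapsto \theta(\rho,t)$, collapsing the integral to $\sum_{s=0}^{r}\bigl(t(s)-t(s+1)\bigr)f(y(s))$, and applying summation by parts with the boundary terms $t(0)f(y(0))=f(0)$ and $t(r+1)f(y(r))=0$---is precisely the standard derivation from that reference. The tie-breaking and empty-interval bookkeeping you flag is handled exactly as you describe (empty intervals contribute zero to both the integral and the sum, and the ordering hypothesis forces the chain $y(0),\ldots,y(r)$ to increment each coordinate through $1,2,\ldots,k_i-1$ in order), so nothing is missing.
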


In this section, we will first remind ourselves of the different characterisations of the base polyhedron, how the greedy algorithm is used to construct extreme points for one of these characterisations, and how subgradients are computed. Using this, we will compute a lower bound that will be used for our algorithm.

We remind ourselves first of the characterisation of the base polyhedron as given initially in \cite{bach2016submodular}:
\begin{defn}[Base Polyhedron]
	Let $f(x_1, \ldots, x_n)$ be a discrete submodular function WLOG with $f(0,0,\ldots, 0) = 0$, in each argument from $0$ to the integer $k_i-1$ respectively. Then the submodular polyhedron can be defined by arguments $w_i \in \prod_i \mathbb{R}^{k_i-1}$ such that for all $(x_1, \ldots, x_n)$:
	\begin{equation}
	\sum_{i=1}^n \sum_{y_i = 1}^{x_i} w_i(y_i) \leq f(x_1, \ldots, x_n),
	\end{equation}
	\begin{equation}
	\sum_{i=1}^n \sum_{y_i = 1}^{k_i-1} w_i(y_i) = f(k_1-1, \ldots, k_n-1).
	\end{equation}
\end{defn}
As mentioned in \cite{bach2016submodular}, this polyhedron is in fact not a polyhedron, and is unbounded, if there is any $k_i > 2$. This can be made explicit in the following example.
\begin{exmp}[Base Polyhedron Unbounded]
	Let $k_i = 3$ for some $i$. Then for that $i$, we can add to $w_i$ any $u_i$ such that
	\begin{equation}
	u_1(1) = -1, u_1(2) = 1,
	\end{equation}
	and we see this won't violate any of the equations in the definition. This argument extends straightforwardly to any $k_i > 2$.
\end{exmp}
Because of this, they instead define the base polyhedron as the convex hull of outputs of a greedy algorithm. The following result shows that the base polyhedron still behaves in the same way:

\begin{lem}
	Let $f$ be some submodular function. Then for any $\rho \in \prod_i \mathbb{R}_\downarrow^{k_i-1}$ we have that
	\begin{equation}
		\max_w \langle w, \rho \rangle = f_\downarrow(\rho) - f(0),
	\end{equation}
	where we take the max over either characterisation of the base polyhedron.
\end{lem}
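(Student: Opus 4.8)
The statement is the lattice analogue of Edmonds' theorem that the Lov\'asz extension equals the support function of the base polyhedron, so my plan is to prove the two inequalities separately and then reconcile the two characterisations. Throughout I would take $f(0)=0$, as in the definition of the base polyhedron; the general case follows by replacing $f$ with $f-f(0)$, which leaves both the increments $f(y(s))-f(y(s-1))$ of Theorem~2.1 and the quantity $f_\downarrow(\rho)-f(0)$ unchanged.

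For the lower bound I would invoke Theorem~2.1 directly: the greedy vector $w^\ast$ obtained by ordering the entries of $\rho$ in decreasing order satisfies
\[
\langle w^\ast,\rho\rangle=\sum_{s=1}^r t(s)\bigl(f(y(s))-f(y(s-1))\bigr)=f_\downarrow(\rho)-f(0).
\]
It then remains to check that $w^\ast$ is feasible. For the convex-hull characterisation this is immediate, since $w^\ast$ is by construction one of the greedy extreme points. For the inequality characterisation I would verify (9)--(10): the equality (10) is a telescoping sum over the full chain $y(0),\dots,y(r)$, while the inequalities (9) are the standard Edmonds feasibility conditions and are exactly where submodularity of $f$ enters, through the usual exchange argument comparing the greedy chain against an arbitrary lattice point. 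Either way $w^\ast$ lies in the base polyhedron, giving $\max_w\langle w,\rho\rangle\ge f_\downarrow(\rho)-f(0)$.

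For the matching upper bound I would show that \emph{every} feasible $w$ obeys $\langle w,\rho\rangle\le f_\downarrow(\rho)-f(0)$ by summation by parts with respect to the sorted order. Writing $W(s)=\sum_{l=1}^s w_{i(l)}(j(l))=\sum_i\sum_{y_i=1}^{(y(s))_i}w_i(y_i)$, constraint (9) gives $W(s)\le f(y(s))$ and (10) gives $W(r)=f(y(r))$, and Abel summation yields
\[
\langle w,\rho\rangle=t(r)W(r)-\sum_{s=1}^{r-1}\bigl(t(s+1)-t(s)\bigr)W(s).
\]
Since $\rho\in\prod_i\mathbb{R}^{k_i-1}_\downarrow$ the values are sorted so that $t(s)\ge t(s+1)$, so each coefficient $t(s)-t(s+1)$ is non-negative; substituting $W(s)\le f(y(s))$ and $W(r)=f(y(r))$ and reversing the summation by parts reproduces $\sum_s t(s)(f(y(s))-f(y(s-1)))=f_\downarrow(\rho)-f(0)$. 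This is also where the unboundedness of the inequality polyhedron in Example~2.1 becomes harmless: recession directions such as that $u$ satisfy $\langle u,\rho\rangle\le 0$ precisely because $\rho$ lies in the monotone cone, so they never improve the objective and the supremum stays finite.

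Finally I would reconcile the two characterisations. The convex hull of greedy points is contained in the inequality polyhedron, so the upper bound applies to it verbatim, while the vertex $w^\ast$ attains it; as the objective is linear, its maximum over the convex hull is taken at a vertex, so both characterisations return $f_\downarrow(\rho)-f(0)$. The step I expect to be the main obstacle is the feasibility claim: proving the greedy output satisfies (9) at \emph{all} lattice points needs the submodular exchange argument adapted to the chain structure on $\prod_i\{0,\dots,k_i-1\}$, and some care is required to ensure the global decreasing order respects the per-coordinate order of each $\rho_i$, so that $y(0),\dots,y(r)$ is a genuine saturated chain in the lattice.
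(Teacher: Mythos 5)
Your proposal is correct, but there is nothing in the paper to compare it against step by step: the paper states this lemma without proof, importing it from \cite{bach2016submodular} (the surrounding text makes clear it is quoting that reference's greedy characterisation), and your argument is essentially a reconstruction of the proof given there --- greedy attainment for one inequality, Abel summation against the sorted order for the other, and containment of the greedy convex hull in the inequality polyhedron to reconcile the two characterisations. The details you spell out are sound: the Abel identity, the sign condition $t(s)\ge t(s+1)$, the use of the equality constraint at the top of the chain, and the observation that recession directions of the unbounded polyhedron of Example 2.1 have non-positive inner product with any $\rho$ in the monotone cone (one small slip: the defining constraints are Equations (8) and (9) of the paper, not (9) and (10)). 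The one step you defer --- feasibility of the greedy vector $w^\ast$ in the inequality-defined polyhedron --- is indeed the crux, and it does go through with lattice submodularity alone, no DR property needed, which is worth recording since the paper relies on exactly this fact when it claims its lower bound requires no DR assumption. To close it: fix a lattice point $x$, set $z(s)=\min(y(s),x)$ along the greedy chain, and call increment $s$ \emph{counted} if $j(s)\le x_{i(s)}$. For a counted increment one checks coordinatewise that $\max\bigl(y(s-1),\,z(s-1)+e_{i(s)}\bigr)=y(s)$ and $\min\bigl(y(s-1),\,z(s-1)+e_{i(s)}\bigr)=z(s-1)$, so the lattice submodular inequality (Equation (2)) gives
\[
f(y(s))-f(y(s-1))\;\le\; f(z(s))-f(z(s-1)),
\]
while uncounted increments satisfy $z(s)=z(s-1)$ and contribute to neither side. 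Summing over $s$, the left-hand side is exactly $\sum_i\sum_{y_i=1}^{x_i}w^\ast_i(y_i)$ and the right-hand side telescopes to $f(x)-f(0)$, which with the normalisation $f(0)=0$ you adopted is precisely constraint (8); constraint (9) is the telescoping equality along the full chain. With that lemma inserted, your proof is complete and matches the cited argument.
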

We see that as long as we take a $w$ compatible with the ordering $\rho$, we will get a subgradient for the function $f_\downarrow(\rho)$. Restricting to the $\rho$ that will give us points in the domain of our original submodular function $f$, we get an element of the subdifferential of $f$.

Now we can construct a lower bound. First, we construct a set $A$ with $r$ elements, each of which corresponds to an increment of one of the $n$ basis vectors. There are $k_i - 1$ copies of the increment of element $i$. Note that such a set $A$ corresponds to a chain defined by
\begin{equation}
0 = p_0 < p_1 < \ldots < p_r = (k_1-1, \ldots, k_n-1). \label{chain}
\end{equation}
Take a permutation of $A$ denoted by $\sigma$ and form its corresponding chain $\mathcal{C}_\sigma$. Ensure $\sigma$ is such that the chain $p_i$ \textit{contains} $z$, something we define now.
\begin{defn}[Chain containing an element]
	Let $p_i$ be a chain as defined in Equation \eqref{chain}. We say the chain \textit{contains} $x$, for some vector $x = (x_1, \ldots, x_n)$, if we have $ p_{x_1 + \ldots + x_n} = x$.
\end{defn}
Note that any chain that contains a vector $y$ is compatible with the ordering of the $\rho$ that corresponds to $y$. In particular, we'll take the chain $C_{\sigma}$ where we increment the first element to $y_1$, then the second to $y_2$ and so on. After it reaches $y$, we let it have any behaviour. Then we can form $w$ by taking differences of successive elements of the chain.

Now we can form the function corresponding to the lower bound by making $w$ as described earlier. This will be denoted by
\begin{equation}
h_{f,y}(i,j) = w_i(j).
\end{equation}
To extend this definition to an entire point $x = (x_1, x_2 ,\ldots, x_n)$ we can do the following:
\begin{equation}
h_{f,y}(x) = \sum_{i = 1}^n \sum_{j = 1}^{x_i} h_{f,y}(i, j) = \sum_{i = 1}^n \sum_{j = 1}^{x_i} w_i(j).
\end{equation}
Now note that for each $p_i$ in the chain, we have that
\begin{equation}
h_{f,y}(p_i) = f(p_i).
\end{equation}
In particular, note that as $y$ is contained in this chain, we have that
\begin{equation}
h_{f,y}(y) = f(y).
\end{equation}
Then due to the fact that this is a subgradient, we have that
\begin{equation}
h_{f,y}(x) \leq f(x)
\end{equation}
for all $x$. This is parameterised by $y, \sigma$ and is tight at $y$.

\subsection{Upper Bound}
In \cite{iyer2012algorithms}, an upper bound was derived for a submodular set function. This can be extended, but it does require DR-submodularity. We show now how to get around that as follows if we know one particular quantity:

\begin{lem}
	Let $f$ be a lattice submodular function. Then $f$ can be represented as the sum of a modular function $g$ and a DR-submodular function $h$.
\end{lem}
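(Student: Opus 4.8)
The plan is to reduce the statement to the fact that a lattice submodular function is DR-submodular precisely when it is additionally concave along each coordinate direction, and then to absorb the failure of coordinate-wise concavity into a modular function. First I would record two elementary facts. The first is that adding a modular function $g$ to $f$ preserves lattice submodularity: since a modular function satisfies $g(x) + g(y) = g(\min(x,y)) + g(\max(x,y))$ with equality, the submodular inequality for $f + g$ follows immediately from that for $f$. The second is the characterisation that a lattice submodular function $h$ which is coordinate-wise concave, meaning the second difference $\Delta_i^2 h(x) = h(x + 2e_i) - 2h(x + e_i) + h(x) \le 0$ for every coordinate $i$ and every $x$, is DR-submodular.

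For the characterisation I would show that it suffices to verify the DR inequality for unit increments ($c = 1$) and for $y = x + e_l$ a single step above $x$, since the general case follows by telescoping the marginal $D(z) = h(z + e_j) - h(z)$ along a monotone chain joining $x$ to $y$ and summing over increments of $e_j$. For a single step there are two cases: when the step is in coordinate $j$ itself, the required inequality $D(x) \ge D(x + e_j)$ is exactly coordinate-wise concavity; when the step is in a different coordinate $l \ne j$, the inequality rearranges to the lattice submodular inequality applied to $x + e_j$ and $x + e_l$, whose meet is $x$ and join is $x + e_j + e_l$.

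With these facts in hand the construction is coordinate by coordinate. For each $i$, set $M_i(x_i) = \max_{x_{-i}} \Delta_i^2 f(x_{-i}, x_i)$, the largest (most convex) second difference of $f$ along coordinate $i$ taken over all settings of the remaining coordinates, a finite maximum and hence attained. I then choose a single-variable function $g_i$ with prescribed second difference $\Delta_i^2 g_i(x_i) = -M_i(x_i)$, which exists by solving the resulting linear recurrence (the affine part of $g_i$ is free). Summing, let $g = \sum_i g_i$ be modular and put $h = f + g$. Because $g_i$ depends only on $x_i$, we get $\Delta_i^2 h(x) = \Delta_i^2 f(x) - M_i(x_i) \le 0$ for every $x$, so $h$ is coordinate-wise concave; and $h$ is lattice submodular by the first fact. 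Hence $h$ is DR-submodular, and $f = (-g) + h$ exhibits $f$ as the sum of the modular function $-g$ and the DR-submodular function $h$.

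The main obstacle is the step that a single separable correction $g_i(x_i)$, depending on $x_i$ alone, can repair the coordinate-$i$ concavity simultaneously at every point $x$, even though $\Delta_i^2 f(x)$ genuinely varies with the other coordinates $x_{-i}$. This is exactly why the worst-case maximum $M_i(x_i)$ appears: over-correcting at non-extremal configurations only makes $h$ more concave along coordinate $i$, which is harmless, while lattice submodularity is untouched because the correction is modular. I would also check that the corrections for different coordinates do not interfere, which they do not, since $\Delta_j^2 g_i = 0$ for $j \ne i$, so the coordinatewise constructions combine without conflict.
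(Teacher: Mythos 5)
Your proof is correct and takes essentially the same route as the paper's: both observe that, by lattice submodularity, the only obstruction to the DR property lies in the diagonal second differences $f(x+2e_i)-2f(x+e_i)+f(x)$, and both cancel these by subtracting a separable (hence modular) function that is sufficiently convex along each coordinate. The paper does this with a single constant $\lambda \geq \max_{x,i}\left(f(x+2e_i)-2f(x+e_i)+f(x)\right)$ and the quadratic $g=\lambda(x_1^2+\ldots+x_n^2)$, whereas you use the finer per-level maxima $M_i(x_i)$ and solve a second-difference recurrence (also proving explicitly the characterisation that lattice submodularity plus coordinate-wise concavity implies DR, which the paper uses implicitly); this is a refinement in execution, not a different argument.
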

\begin{proof}
	Let $\lambda$ be the largest second difference of the function $f$ that violates the DR property, or at least an upper bound of it. Because of submodularity, we know the difference will be a second difference entirely within one basis element. Namely, take:
	
	\begin{equation}
		\lambda \leq \max_{x, i} \left(f(x + 2e_i) - 2f(x+e_i) + f(x)\right).
	\end{equation}
	
	Then let $g = \lambda(x_1^2 + \ldots + x_n^2)$. This will give us $h = f-g$ DR-submodular as required.
\end{proof}
Computing the exact tight value of $\lambda$ is often hard, but there are some cases where it will be easier to derive upper bounds:
\begin{enumerate}
	\item If the function $f$ is the discretisation of a continuous function, we can compute the largest positive eigenvalue, or an upper bound of it, via a number of eigenvalue algorithms.
	\item If the function $f$ is also $L^\natural$-convex (or midpoint convex), the function is convex-extensible \cite{moriguchi2012discrete}, and so proceed similarly to above.
	\item If the function $f$ is a quadratic program, with $f(x) = x^T A x + b^Tx + c$, then we have that $\lambda$ is the maximum positive diagonal element of $A$ (or $0$ if none exist).
\end{enumerate}

The idea now here is that we will let $f = g+h$, and then derive a modular and tight upper bound for $h$, thus giving our modular and tight upper bound for $f$ as $g$ is already modular. The aim for that is to generalise the bounds found in \cite{iyer2012algorithms}, given as

\begin{align}
f(Y) &\leq f(X) - \sum_{j \in X \setminus Y} f(j | X \setminus j) + \sum_{j \in Y \setminus X} f(j | \emptyset), \\
f(Y) &\leq f(X) - \sum_{j \in X \setminus Y} f(j | V \setminus j) + \sum_{j \in Y \setminus X} f(j | X).
\end{align}
In this section, we assume that $f$ is a lattice submodular function given on the product of sets $\{0, \ldots, k_i-1\}$. The extension of these bounds is given in the following lemma:
\begin{lem}
	Let $f$ be a DR-submodular function as described. Let $m(x) = \max(x,0)$ with $m$ extending its arguments to vectors componentwise. Let $x, y$ be vectors in $\prod_i\{0, \ldots, k_i\}$. Let
	\begin{align}
	(a_1, \ldots, a_n) &= m(x-y), \\
	(b_1, \ldots, b_n) &= m(y-x).
	\end{align} Then we have the following:
	\begin{align}
	f(y) &\leq f(x) - \sum_{i=1}^n [f(x) - f(x - a_ie_i)] + \sum_{i=1}^n f(b_ie_i), \\
	f(y) &\leq f(x) - \sum_{i=1}^n [f(k_{max}) - f(k_{max} - a_ie_i)] + \sum_{i=1}^n f(f(x+b_ie_i) - f(x)),
	\end{align}
	where $e_i$ are the usual basis vectors.
\end{lem}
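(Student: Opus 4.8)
The plan is to realise $y$ as the endpoint of a monotone lattice path starting at $x$, and to control each single-coordinate increment or decrement along that path by the block form of DR-submodularity (the defining inequality, valid for every $c>0$), summed telescopically. For the first bound I would route the path through $\min(x,y)$; for the second, through $\max(x,y)$.

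First I would record the elementary bookkeeping. Because $x_i>y_i$ forces $b_i=0$ and $y_i>x_i$ forces $a_i=0$, the vectors $a=(a_1,\dots,a_n)$ and $b=(b_1,\dots,b_n)$ have disjoint supports, and a coordinatewise check gives
\[
x-a=\min(x,y),\quad \min(x,y)+b=y,\quad x+b=\max(x,y),\quad \max(x,y)-a=y .
\]
Hence removing the $a$-block and then adding the $b$-block traces a monotone path $x\to\min(x,y)\to y$, while adding $b$ first and removing $a$ second traces $x\to\max(x,y)\to y$; in both cases every intermediate vertex stays in $\prod_i\{0,\dots,k_i\}$, and the evaluations $f(x-a_ie_i)$, $f(x+b_ie_i)$, $f(k_{\max}-a_ie_i)$ are well defined since $a_i\le x_i$ and $x_i+b_i\le k_i$.

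For the first inequality I would descend $x\to\min(x,y)$ one support-coordinate at a time, removing the whole block $a_ie_i$ at a base point $x^{(t)}$ with $x^{(t)}\le x$. DR-submodularity applied to the pair $x^{(t)}-a_ie_i\le x-a_ie_i$ gives $f(x^{(t)})-f(x^{(t)}-a_ie_i)\ge f(x)-f(x-a_ie_i)$, so the telescoped drop $f(x)-f(\min(x,y))$ is at least $\sum_i\bigl[f(x)-f(x-a_ie_i)\bigr]$. I would then ascend $\min(x,y)\to y$, adding each block $b_ie_i$ at a base $\ge 0$; DR applied to the pair with base $0$ bounds each added marginal above by $f(b_ie_i)-f(0)=f(b_ie_i)$, using the normalisation $f(0)=0$. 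Adding the two telescopes yields the first bound. For the second inequality I run the other path: ascending $x\to\max(x,y)$ adds each $b_ie_i$ at a base $\ge x$, so DR bounds each marginal above by $f(x+b_ie_i)-f(x)$ (the quantity the statement intends by its last summand, which as written carries a stray outer $f$), and descending $\max(x,y)\to y$ removes each $a_ie_i$ at a base $\le k_{\max}$, so DR bounds each removed marginal below by $f(k_{\max})-f(k_{\max}-a_ie_i)$. Combining the two telescopes gives the second bound.

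The one real subtlety, and the step I would be most careful about, is pairing the correct path with the correct bound. The ``add at base $x$'' estimate needs the base during the additions to dominate $x$, which holds only if the additions are performed first, i.e. on the $\max$ path; on the $\min$ path the base during the additions agrees with $x$ on the added coordinates but sits strictly below $x$ on the removed ones, so it fails to be $\ge x$ and the estimate breaks. Symmetrically, the ``remove at base $x$'' estimate of the first bound needs the base during the removals to be $\le x$, which forces the removals to come first, i.e. the $\min$ path. Once the path is chosen so that every DR comparison is monotone in the right direction, the disjointness of the supports of $a$ and $b$ makes the telescoping clean and the remaining manipulations routine.
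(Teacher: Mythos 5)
Your proof is correct and takes essentially the same route as the paper's: telescoping along monotone lattice paths through $\min(x,y)$ and $\max(x,y)$ and bounding each single-block marginal via DR-submodularity, the only cosmetic difference being that you apply DR directly with bases $0$ and $k_{\max}$, where the paper first derives intermediate bounds tight at $\min(x,y)$ and $\max(x,y)$ and then relaxes them by a second DR application. You also rightly note the stray outer $f$ in the statement's last summand and the implicit normalisation $f(0)=0$ needed for the $\sum_i f(b_ie_i)$ term.
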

\begin{proof}
	We first show the following bounds:
	\begin{align}
	f(y) &\leq f(x) - \sum_{i=1}^n [f(x) - f(x - a_ie_i)] + \sum_{i=1}^n [f(b_ie_i+ \min(x,y)) - f(\min(x,y))], \\
	f(y) &\leq f(x) - \sum_{i=1}^n [f(\max(x,y)) - f(\max(x,y) - a_ie_i)] + \sum_{i=1}^n f(f(x+b_ie_i) - f(x)).
	\end{align}
	The proof proceeds similarly to the derivation in \cite{iyer2012algorithms}, but with unions and intersections replaced with mins and maxes respectively. We start with the second statement. Take an arbitrary $x,y$ with $a_i, b_i$ as before. Then note we have
	\begin{align}
	f(\max(x,y)) - f(x) &= \sum_{i=1}^{n} [f(x + \sum_{j=1}^i b_{j}e_{j}) - f(x + \sum_{j=1}^{i-1} b_{j}e_{j})] \\
	&= \sum_{i=1}^n \rho_{b_i, e_i} (x + \sum_{j=1}^{i-1} b_{j}e_{j}),
	\end{align}
	where we take $a_0, e_0 = 0$. Here $\rho_{a,b}(c)$ denotes the marginal return on adding the value $a$ in the basis vector $b$ when the function already has argument $c$. Using the DR-submodular property, we then see
	\begin{equation}
	\sum_{i=1}^n \rho_{b_i, e_i} (x + \sum_{j=1}^{i-1} b_{j}e_{j}) \leq \sum_{i=1}^n \rho_{b_i, e_i} (x) = \sum_{i=1}^n f(x+b_ie_i) - f(x).
	\end{equation}
	Similarly, we'll now consider the following expression:
	\begin{align}
	f(\max(x,y)) - f(y) &= \sum_{i=1}^{n} [f(y + \sum_{j=1}^i a_{j}e_{j}) - f(y + \sum_{j=1}^{i-1} a_{j}e_{j})] \\
	&= \sum_{i=1}^n \rho_{a_i, e_i} (y + \sum_{j=1}^{i} a_{j}e_{j} - a_ie_i) \\
	&\geq \rho_{a_i, e_i} (\max(x,y) - a_ie_i) = \sum_{i=1}^n [f(\max(x,y)) - f(\max(x,y) - a_i)].
	\end{align}
	Subtracting these two gives us the required result. We now proceed to the first statement. We get the first inequality similarly to just how we proceeded:
	\begin{align}
	f(x) - f(\min(x,y)) &= \sum_{i=1}^{n} [f(x + \sum_{j=1}^i a_{j}e_{j}) - f(x + \sum_{j=1}^{i-1} a_{j}e_{j})] \\
	&= \sum_{i=1}^n \rho_{a_i, e_i} (y + \sum_{j=1}^{i} a_{j}e_{j} - a_ie_i) \\
	&\leq \rho_{a_i, e_i} (x - a_ie_i) = \sum_{i=1}^n [f(x) - f(x - a_ie_i)].
	\end{align}
	The second also coming easily:
	\begin{align}
	f(y) - f(\min(x,y)) &= \sum_{i=1}^{n} [f(y + \sum_{j=1}^i b_{j}e_{j}) - f(y + \sum_{j=1}^{i-1} b_{j}e_{j})] \\
	&= \sum_{i=1}^n \rho_{b_i, e_i} (y + \sum_{j=1}^{i-1} b_{j}e_{j}), \\
	&\leq \sum_{i=1}^n \rho_{b_i, e_i} (\min(x,y)) \\
	&= \sum_{i=1 y-x}^n [f(b_ie_i+ \min(x,y)) - f(\min(x,y))].
	\end{align}
	And again we subtract to get the first bounds that we wanted. To get to the required result, simply apply DR-submodularity to the final term of the first bound and the second term of the second bound.
\end{proof}
We note that these bounds are separable. Additionally, we see that these bounds are tight at $x$, namely equality is achieved with $y = x$. This result also applies as written for continuous submodular functions, where the vectors $x, y$ instead belong to $[0,1]^n$ WLOG.

\section{Three algorithms}
We describe the integer lattice majorisation-minimisation algorithms here, in Algorithms $1,2,3$. In Algorithm $1$, note that at every step we are minimising a lattice submodular function, which as shown in \cite{bach2016submodular} to get this to arbitrary precision we have complexity $\mathcal{O}((\frac{2GBn}{\varepsilon})^3\log(\frac{2GBn}{\varepsilon}))$, for a continuous submodular function defined on $[0,B]^n$ with Lipschitz constant $G$ (note that the author minimises a discretised version of the continuous function). For Algorithm $2$, we are instead maximising, for which we have an approximation factor of $1/3$ \cite{gottschalk2015submodular} and runs in $\mathcal{O}(kn)$ calls. For Algorithm $3$, we are at each point minimising a modular function, which can be done easily in $\mathcal{O}(k')$ function evaluations, where we have $k' = \sum_i k_i$, by evaluating each separated function at every point and taking the $n$ minima. Additionally, we note that Algorithms $2,3$ will require in principle an upper bound on the quantity $\lambda$ as described in Lemma $2.2$.

We note that we have quite easily:
\begin{align}
v(x_{t+1}) &= f(x_{t+1}) - h^g_\sigma(x_{t+1}) \\
&\leq m^f_{x_t}(x_t) - h^g_\sigma(x_t) \\
&= f(x_t) - g(x_t) = v(x_t),
\end{align}
the second to last equality coming from the demonstrated tightness of our bounds. However, we note that the sequence may not strictly decrease, and we seek a convergence condition. For the upper bound, we can simply say try both of them, and if neither strictly decreases the function, we're done. For the lower bound, there are many permutations we can change and we want some sort of a stopping criterion.

\IncMargin{1em}
\begin{algorithm}
	\SetKwData{Left}{left}\SetKwData{This}{this}\SetKwData{Up}{up}
	\SetKwFunction{Union}{Union}\SetKwFunction{FindCompress}{FindCompress}
	\SetKwInOut{Input}{input}\SetKwInOut{Output}{output}
	
	\Input{Function $v(x) = f(x) - g(x)$, where $f$ and $g$ are both submodular.}
	\Output{A local minimum of $v(x)$}
	\BlankLine
	Discretise the function $v(x)$ in a preferred way\;
	$x_0 \leftarrow (0,\ldots,0)$; $t \leftarrow 0$\;
	\While{not converged ($x_{t+1} \neq x_t)$}{
		Form the extension $f_\downarrow(x_t)$ of $f(x_t)$.\;
		Choose a permutation $\sigma$ such that the induced chain is compatible with $x_t$\;
		$x_{t+1} = \arg \min_{x} f_{\downarrow}(x) - h^g_\sigma(x)$\;
		t = t+1\;
	}
	
	\caption{Integer lattice SubSup algorithm.}
	\label{algo_disjdecomp}
\end{algorithm}
\DecMargin{1em}

\begin{lem}
	If we choose $\mathcal{O}(n)$ permutations each with different increments directly before and after $x_t$, and attempt to decrease with both upper bounds and we are not successful, then we have reached a local minimum.
\end{lem}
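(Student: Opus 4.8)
The plan is to read the conclusion ``$x_t$ is a local minimum'' as the statement that $v(x_t) \leq v(x_t \pm e_i)$ for every feasible single-coordinate neighbour on the lattice, and to certify each such inequality separately. The whole argument rests on the following template: for a given neighbour I will exhibit one of the ModMod surrogates that (i) globally upper bounds $v$, (ii) is tight at $x_t$, and (iii) is tight at that particular neighbour; a \emph{failed} minimisation then certifies that $x_t$ minimises this surrogate, and the two tightness identities convert ``surrogate at neighbour $\geq$ surrogate at $x_t$'' into ``$v$ at neighbour $\geq v(x_t)$''. The reason the hypothesis asks for \emph{both} upper bounds and for a family of permutations is exactly that no single surrogate is tight at all $2n$ neighbours at once.

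First I would recall the surrogate: for a permutation $\sigma$ the ModMod majorant is $M^\sigma(x) = m^f_{x_t}(x) - h^g_\sigma(x)$, where $m^f_{x_t}$ is one of the two modular upper bounds of Lemma $2.3$ and $h^g_\sigma$ is the modular lower bound of Subsection $2.2$. Since $m^f_{x_t} \geq f$ and $h^g_\sigma \leq g$ we have $M^\sigma \geq v$ everywhere, and both bounds are tight at $x_t$, so $M^\sigma(x_t) = v(x_t)$. The key computation, which I expect to be the crux of the proof, is that the two bounds of Lemma $2.3$ are tight in complementary directions: putting $y = x_t + e_j$ gives $a = m(x_t - y) = 0$ and $b = m(y - x_t) = e_j$, so the second bound collapses to exactly $f(x_t + e_j)$; symmetrically $y = x_t - e_i$ gives $a = e_i$, $b = 0$, and the first bound collapses to $f(x_t - e_i)$ (using $f(0)=0$ WLOG as in the base-polyhedron convention). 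Thus the second upper bound is tight on ``up'' moves $x_t + e_j$ and the first on ``down'' moves $x_t - e_i$. Since $f$ is decomposed as a modular part plus a DR-submodular part and Lemma $2.3$ is applied to the DR part, I would note that re-adding the exact modular part preserves these neighbour-tightness identities, so the argument applies to a general lattice submodular $f$.

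Next I would use that $h^g_\sigma$ equals $g$ at every point of its chain $\mathcal{C}_\sigma$, which passes through $x_t$. If $\sigma$ increments coordinate $j$ directly after $x_t$ then $x_t + e_j$ lies on the chain and $h^g_\sigma(x_t + e_j) = g(x_t + e_j)$; if it increments coordinate $i$ directly before $x_t$ then $h^g_\sigma(x_t - e_i) = g(x_t - e_i)$. Combining with the previous paragraph, the surrogate built from the second upper bound and a permutation incrementing $j$ immediately after $x_t$ satisfies $M^\sigma(x_t + e_j) = v(x_t + e_j)$ and $M^\sigma(x_t) = v(x_t)$. A failed ModMod step means $x_t$ globally minimises the modular function $M^\sigma$, hence $M^\sigma(x_t + e_j) \geq M^\sigma(x_t)$, and the two tightness identities give $v(x_t + e_j) \geq v(x_t)$. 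The analogous argument with the first upper bound and a permutation incrementing $i$ immediately before $x_t$ yields $v(x_t - e_i) \geq v(x_t)$.

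Finally I would count permutations and handle the box. Choosing the permutations so that every coordinate occurs both as the increment directly before $x_t$ and as the increment directly after $x_t$ needs only $\mathcal{O}(n)$ of them; running ModMod with both upper bounds on each then covers every feasible direction $x_t \pm e_k$ (the down-move is simply dropped when $(x_t)_k = 0$ and the up-move when $(x_t)_k = k_k - 1$, since those neighbours leave the domain). If none of these attempts strictly decreases $v$, the per-direction inequalities above hold for all feasible $k$, so $x_t$ is a local minimum. The step to write out carefully is the complementary-tightness computation of the second paragraph, both because it is what forces the use of \emph{both} bounds and because it is easy to swap the two directions; the secondary bookkeeping point is verifying that the targeted neighbour genuinely lies on the chosen chain so that the $h^g_\sigma$ tightness applies, which is where the careful boundary handling at the edges of the box is needed.
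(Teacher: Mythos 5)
Your proposal is correct and follows essentially the same route as the paper: the paper's proof is a one-line deferral to the set-function argument of Iyer et al., replacing $g(X \cup j), g(X \setminus j)$ with $g(x+e_i), g(x-e_i)$, and your write-up is exactly the lattice translation of that argument (complementary tightness of the two modular upper bounds at up- and down-neighbours, chain-tightness of $h^g_\sigma$ at the increments directly before and after $x_t$, and a failed surrogate minimisation certifying $v(x_t \pm e_i) \geq v(x_t)$). Your version is in fact more complete than the paper's, since it spells out the $f(0)=0$ normalisation, the modular-plus-DR decomposition bookkeeping, and the boundary cases that the paper leaves implicit.
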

\begin{proof}
	Proof is functionally identical to the set function case as in \cite{iyer2012algorithms}, except instead of saying we consider all $g(X \cup j), g(X \setminus j)$, we just consider all $g(x + e_i), g(x-e_i)$ where $e_i$ are basis vectors.
\end{proof}

\IncMargin{1em}
\begin{algorithm}
	\SetKwData{Left}{left}\SetKwData{This}{this}\SetKwData{Up}{up}
	\SetKwFunction{Union}{Union}\SetKwFunction{FindCompress}{FindCompress}
	\SetKwInOut{Input}{input}\SetKwInOut{Output}{output}
	
	\Input{Function $v(x) = f(x) - g(x)$, where $f$ and $g$ are both submodular and have bounded Hessian.}
	\Output{A local minimum of $v(x)$}
	\BlankLine
	
	Discretise the function $v(x)$ in a preferred way\;
	$x_0 \leftarrow (0,\ldots,0)$; $t \leftarrow 0$\;
	\While{not converged ($x_{t+1} \neq x_t)$}{
		$x_{t+1} = \arg \min_{x} m^f_{x_t}(x) - g(x)$\;
		t = t+1\;
	}
	
	\caption{Integer lattice SupSub algorithm.}
	\label{algo_disjdecomp}
\end{algorithm}
\DecMargin{1em}

\IncMargin{1em}
\begin{algorithm}
	\SetKwData{Left}{left}\SetKwData{This}{this}\SetKwData{Up}{up}
	\SetKwFunction{Union}{Union}\SetKwFunction{FindCompress}{FindCompress}
	\SetKwInOut{Input}{input}\SetKwInOut{Output}{output}
	
	\Input{Function $v(x) = f(x) - g(x)$, where $f$ and $g$ are both submodular and have bounded Hessian.}
	\Output{A local minimum of $v(x)$}
	\BlankLine
	
	Discretise the function $v(x)$ in a preferred way\;
	$x_0 \leftarrow (0,\ldots,0)$; $t \leftarrow 0$\;
	\While{not converged ($x_{t+1} \neq x_t)$}{
		Choose a permutation $\sigma$ such that the induced chain is compatible with $x_t$\;
		$x_{t+1} = \arg \min_{x} m^f_{x_t}(x) - h^g_\sigma(x)$\;
		t = t+1\;
	}
	
	\caption{Integer lattice ModMod algorithm.}
	\label{algo_disjdecomp}
\end{algorithm}
\DecMargin{1em}

\section{Theoretical Analysis}
We note that we don't get any multiplicative approximation guarantees, as the hardness results are inherited from the set function case. However, we would like to extend the additive hardness results of \cite{iyer2012algorithms}. While following their proof will rely on $f$ being DR-submodular, we recall we can write $f = g+h$ for $g$ modular and $h$ DR-submodular via Lemma $2.2$.

So now we act just on the DR-submodular function. This requires an extension of the decomposition from \cite{cunningham1983decomposition} which we give in a slightly weaker form:

\begin{lem}
	Let $f$ be any DR-submodular function with $f(0) = 0$. It can be decomposed into a modular function $g$ plus a monotone function $h$ with $h(0) = 0$.
\end{lem}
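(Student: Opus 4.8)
The plan is to transport Cunningham's set-function decomposition---where one subtracts from a submodular $f$ the modular function assembled from the extreme marginals $f(j\mid V\setminus j)$, leaving a monotone nondecreasing remainder (this is exactly the modular lower bound used in \cite{iyer2012algorithms})---to the integer lattice, replacing ``add $j$ to the rest of the ground set'' by ``increment coordinate $i$ while every other coordinate sits at its maximum.''

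First I would build $g$ coordinatewise. Any modular function on $\prod_i\{0,\dots,k_i-1\}$ that vanishes at the origin is separable, so it is pinned down by increment weights $w_i(j)=g_i(j)-g_i(j-1)$ through $g(x)=\sum_{i=1}^n\sum_{j=1}^{x_i} w_i(j)$. For each coordinate $i$ and each level $j\in\{1,\dots,k_i-1\}$ let $\bar{x}^{(i,j)}$ be the point whose $i$-th coordinate equals $j-1$ and whose remaining coordinates are all maximal, and set $w_i(j)=f(\bar{x}^{(i,j)}+e_i)-f(\bar{x}^{(i,j)})$. This is the lattice counterpart of the top marginal $f(j\mid V\setminus j)$, and by construction $g$ is modular with $g(0)=0$.

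Next I would put $h=f-g$ and verify the two claimed properties. Tightness at the origin is immediate, since $h(0)=f(0)-g(0)=0$. For monotonicity it is enough that every single-step marginal of $h$ is nonnegative, i.e. that $f(x+e_i)-f(x)\ge w_i(j)$ for every $x$ with $x_i=j-1$. This is the crux: the submodularity of $f$ (which is implied by its DR-submodularity) forces the marginal of the step $+e_i$ to be nonincreasing as the other coordinates grow, so among all base points with $x_i=j-1$ this marginal is smallest exactly at $\bar{x}^{(i,j)}$; hence $w_i(j)$ is a valid lower bound for all of them and $h$ is monotone nondecreasing. I would also record that, because $g$ is modular, the modular terms cancel in the submodularity inequality and $h=f-g$ remains lattice submodular; the \emph{weakening} relative to \cite{cunningham1983decomposition} is precisely that $h$ need not stay DR-submodular, since subtracting the (concave, separable) $g$ can destroy the diminishing-returns property while preserving monotonicity and lattice submodularity.

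The step I expect to be the main obstacle is justifying that the minimal $+e_i$ marginal over $\{x:x_i=j-1\}$ is attained at the fully-saturated point $\bar{x}^{(i,j)}$. I would prove this by iterating the two-point inequality $f(x+e_i)-f(x)\ge f(x+e_i+e_\ell)-f(x+e_\ell)$ for $\ell\ne i$, building up from $\bar{x}^{(i,j)}$ one coordinate-step at a time. Care is needed to keep the comparison strictly within the slice $x_i=j-1$, so that only cross-coordinate (rather than same-coordinate) differences are invoked, and to confirm that this single construction simultaneously certifies nonnegativity of the marginal at every level $j$ of every coordinate $i$.
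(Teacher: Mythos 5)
Your proof is correct, but it is a genuinely different construction from the paper's, and the difference matters. You generalise Cunningham's marginals directly: $w_i(j)=f(\bar{x}^{(i,j)}+e_i)-f(\bar{x}^{(i,j)})$, the gain of the $j$-th step in coordinate $i$ with every other coordinate saturated, so that monotonicity of $h=f-g$ reduces to comparing the marginal at $x$ with the marginal at $\bar{x}^{(i,x_i+1)}$ --- two base points with the \emph{same} $i$-th coordinate. As you observe, that comparison invokes only cross-coordinate second differences, i.e.\ plain lattice submodularity, so your argument in fact proves a stronger statement than the lemma (the DR hypothesis is never used). The paper instead charges level $j$ of coordinate $k$ the \emph{averaged top-down drop} $\frac{m_{j,k}}{j}$, where (reading off the algebra in its verification step) $m_{j,k}=f(k_{max})-f(k_{max}-je_k)$, and its monotonicity check requires bounding every term $f(k_{max}-(j-1)e_i)-f(k_{max}-je_i)$, $1\le j\le x_i+1$, by the current marginal $f(x+e_i)-f(x)$ via the DR property. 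That bound is only valid for terms whose base point $k_{max}-je_i$ dominates $x$; once $j$ exceeds the gap between $x_i$ and the top level of coordinate $i$, the base point falls below $x$ in that coordinate and DR gives the \emph{reverse} inequality. Indeed the paper's $h$ can fail to be monotone: in one dimension with levels $\{0,1,2\}$ and $f(0)=0$, $f(1)=f(2)=1$ (concave, hence DR-submodular), one gets $g(1)=0$ and $g(2)=\tfrac12$, so $h(2)-h(1)=-\tfrac12<0$, whereas your weights $w_1(1)=1$, $w_1(2)=0$ give the monotone $h\equiv 0$. So your route is not merely different: it is more elementary (submodularity instead of DR, with a one-line verification), and it is the one that actually closes the argument, precisely because your comparison never moves in the coordinate being incremented.
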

\begin{proof}
	We first construct the modular function $g$, then show that the function $h = f - g$ is monotone, and takes the value $0$ at $0$. For any input $y$ we form:
	\begin{equation}
	g(y) = \sum_{k = 1}^n \sum_{j = 1}^{y_k} \frac{m_{j,k}}{j},
	\end{equation}
	for $0 < j \leq y_k$ for each $k$. We note this decays to the modular function in \cite{cunningham1983decomposition} if we restrict all $y_k$ to $\{0, 1\}$. This function is clearly modular. Additionally, it is clear that $h = f-g$ has $h(0) = 0$. To show that it is monotone:
	\begin{align}
	f'(k + e_i) &= f(k + e_i) - \sum_{k = 1}^n \sum_{j = 1}^{y'_k} \frac{m_{j,k}}{j} \\
	f'(k) &= f(k) - \sum_{k = 1}^n \sum_{j = 1}^{y_k} \frac{m_{j,k}}{j}.
	\end{align}
	Note that in the first equation, there will be one extra term subtracted, $\frac{m_{i,k_i + 1}}{k_i+1}$. We claim the right hand side of the first equation here is greater than the second, as:
	\begin{align}
	&f(k + e_i) - \sum_{k = 1}^n \sum_{j = 1}^{y'_k} \frac{m_{j,k}}{j} - (f(k) - \sum_{k = 1}^n \sum_{j = 1}^{y_k} \frac{m_{j,k}}{j}) \\
	&= f(k+e_i) - f(k) - \frac{m_{i,k_i + 1}}{k_i+1} \\
	&= \frac{1}{k_i+1}((k_i+1)(f(k+e_i) - f(k)) - (f(k_{max}) - f(k_{max} - e_i k_i + 1))).
	\end{align}
	We then split up the second term to get:
	\begin{equation}
	f(k_{max}) - f(k_{max} - e_ik_i + 1)) = \sum_{j = 1}^{k_1} f(k_{max} - (j-1)e_i) - f(k_{max} - je_i),
	\end{equation}
	and use the DR property to bound each individual term by $f(k+e_i) - f(k)$, giving us our result.
\end{proof}
Using this, we now have the following combining the two decompositions:
\begin{theorem}
	Let $f$ be a lattice submodular function with $f(0) = 0$. Then $f$ can be represented as the sum of a modular function $g$ and a monotone submodular function $h$, with $h(0) = 0$.
\end{theorem}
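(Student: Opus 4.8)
The plan is to chain together the two decomposition results already in hand, Lemma 2.2 and Lemma 4.1. The order matters: Lemma 4.1 takes a DR-submodular function as input, whereas $f$ is only assumed lattice submodular, so I would first apply Lemma 2.2 to extract a DR-submodular piece and only then invoke Lemma 4.1.

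Concretely, I would first apply Lemma 2.2 to write $f = g_1 + \tilde h$, where $g_1 = \lambda(x_1^2 + \dots + x_n^2)$ is modular and $\tilde h$ is DR-submodular. Since $g_1(0) = 0$ and $f(0) = 0$ by hypothesis, we get $\tilde h(0) = f(0) - g_1(0) = 0$, so $\tilde h$ meets the conditions of Lemma 4.1. Applying that lemma to $\tilde h$ yields $\tilde h = g_2 + h$ with $g_2$ modular and $h$ monotone, $h(0) = 0$. Setting $g = g_1 + g_2$, which is modular as a sum of modular functions, produces the desired representation $f = g + h$ with $h$ monotone and $h(0) = 0$.

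The only step requiring genuine argument, rather than bookkeeping, is confirming that $h$ is submodular and not merely monotone, since Lemma 4.1 guarantees only monotonicity. I would rely on the fact that subtracting a modular function preserves submodularity: the defining inequalities — both the lattice submodularity inequality and the DR inequality — are built from second-order differences, to which a modular function contributes zero (for separable $g_2 = \sum_i g_{2,i}(x_i)$ one checks $g_2(x) + g_2(y) = g_2(\min(x,y)) + g_2(\max(x,y))$ coordinatewise). Thus $h = \tilde h - g_2$ inherits DR-submodularity, hence lattice submodularity, from $\tilde h$, and together with its monotonicity this yields the claimed monotone submodular $h$. I expect this preservation-of-submodularity check to be the main, though still short, obstacle, since everything else reduces to substituting one lemma into the other and observing that $f(0)=0$ propagates through the construction.
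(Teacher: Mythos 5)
Your proof is correct and takes essentially the same route as the paper, which presents this theorem with no separate argument beyond the remark that it follows by ``combining the two decompositions'' of Lemma 2.2 and Lemma 4.1. You actually supply more detail than the paper does, in particular the check that $\tilde h(0)=0$ so Lemma 4.1 applies, and that subtracting the modular piece $g_2$ preserves submodularity of $h$ --- points the paper leaves implicit.
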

This will allow us to get additive bounds similar to \cite{iyer2012algorithms}:
\begin{lem}
	Consider the problem of minimising $f(x) - g(x)$. Apply our decomposition to instead have $v(x) = f'(x) - g'(x) + k(x)$. Here $k$ is modular, and $f, g$ are monotone. Then we have:
	\begin{align}
	\min v(x) &\geq \min_x f'(x) + k(x) - g'(k_{max}), \\
	\min v(x) &\geq f'(0) - g'(k_{max}) + \sum_{k=1}^n (\min_y \sum_{i = 1}^y \frac{m_{i,k}}{i} ).  \\
	\end{align}
\end{lem}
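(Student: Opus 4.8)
The plan is to derive both inequalities as direct consequences of Theorem 4.1 and the monotonicity it supplies; the only real content is setting up the decomposition cleanly and then bounding the two monotone pieces from opposite ends of the lattice. First I would fix notation. Applying Theorem 4.1 to each of $f$ and $g$ (both lattice submodular, WLOG normalised so that $f(0) = g(0) = 0$) writes $f = f' + k_f$ and $g = g' + k_g$, where $f', g'$ are monotone (increasing) submodular with $f'(0) = g'(0) = 0$, and $k_f, k_g$ are modular of the separable form $\sum_{\ell} \sum_{i=1}^{x_\ell} m_{i,\ell}/i$. Subtracting and collecting the two modular pieces into a single modular function $k \defeq k_f - k_g$ (whose coefficients are the coordinatewise differences of those of $k_f$ and $k_g$, still of the stated separable form) yields the representation $v(x) = f'(x) - g'(x) + k(x)$ with $f', g'$ monotone and $k$ modular, which is the starting point asserted in the lemma.

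For the first inequality I would use monotonicity of $g'$ at the top of the lattice. Since $g'$ is increasing we have $g'(x) \leq g'(k_{max})$ for every $x$, hence pointwise $v(x) = f'(x) + k(x) - g'(x) \geq f'(x) + k(x) - g'(k_{max})$. Taking the minimum over $x$ on both sides and pulling out the constant $g'(k_{max})$ gives $\min_x v(x) \geq \min_x [f'(x) + k(x)] - g'(k_{max})$, which is precisely the first bound.

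For the second inequality I would continue from the first and now use monotonicity of $f'$ at the bottom of the lattice together with the separability of $k$. Because $f'$ is increasing with minimum value $f'(0)$, we have $f'(x) + k(x) \geq f'(0) + k(x)$ pointwise, so $\min_x [f'(x) + k(x)] \geq f'(0) + \min_x k(x)$. The key remaining observation is that $k$ is modular and therefore fully separable across coordinates, so its minimisation decouples exactly: $\min_x k(x) = \sum_{\ell=1}^n \min_{y} \sum_{i=1}^{y} m_{i,\ell}/i$, which is the stated sum up to relabelling of the coordinate index. Substituting into the first bound produces exactly the second inequality.

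The individual steps are routine; the only points demanding care are getting the directions of the two monotonicity inequalities right (bounding $g'$ from above at $k_{max}$ but $f'$ from below at $0$) and verifying that the combined modular function $k = k_f - k_g$ retains the separable coefficient form, so that its minimisation genuinely splits coordinatewise and the decoupling of $\min_x k(x)$ is an equality rather than merely an inequality. I expect this bookkeeping — tracking that the difference of two decompositions still lands in the class to which Theorem 4.1 and the separable minimisation apply — to be the main thing to state carefully, rather than a substantive obstacle.
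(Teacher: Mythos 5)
Your proposal is correct and follows essentially the same argument as the paper: bound $g'$ above by $g'(k_{max})$ via monotonicity for the first inequality, then bound $f'$ below by $f'(0)$ and use the coordinatewise separability of the modular part $k$ to split $\min_x k(x)$ into the sum of per-coordinate minima for the second. Your extra care in setting up $k = k_f - k_g$ from the two applications of the decomposition is a reasonable elaboration of what the lemma statement takes as given, but it does not change the route.
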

\begin{proof}
	We have:
	\begin{align}
	\min_x v(x) &= \min_x f'(x) - g'(x) + k(x) \\
	&\geq \min_x(f'(x) + k(x)) - \max_x g'(x) \\
	&= \min_x(f'(x) + k(x)) - g'(k_{max}),
	\end{align}
	by monotonicity of $g'$. Next, we continue from the previous line:
	\begin{align}
	&\geq \min_x f'(x) + \min_x k(x) - g'(k_{max}) \\
	&= f'(0) - g'(k_{max}) + \sum_{k=1}^n (\min_y \sum_{i = 1}^y \frac{m_{i,k}}{i} ).
	\end{align}
\end{proof}
The nested optimisation problem here in the second bound is non-trivial, but for our purposes it doesn't actually need to be solved. All we need to see is that by setting $y$ to zero, we can eliminate this term and thus by the monotonicity of $f', g'$, we will always get that this second bound is less than zero.

We also prove some other results that show this algorithm can be broadly applied. The first results shows that we can efficiently minimise many functions using this approach:
\begin{lem}
	Let $v(x)$ be any function on the lattice $\{0, \ldots, k_i-1\}$. Then $v$ can be written as the difference of two submodular functions.
\end{lem}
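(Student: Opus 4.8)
The plan is to mimic the standard set-function argument (due to Narasimhan and Bilmes): since $v$ is arbitrary, I will add to it a sufficiently strongly submodular function $g$ so that $v + g$ becomes submodular, and then write $v = (v+g) - g$ as a difference of two submodular functions. The whole argument hinges on the \emph{submodularity defect}
\[
D_f(x,y) = f(x) + f(y) - f(\min(x,y)) - f(\max(x,y)),
\]
which by Equation $(2)$ is nonnegative for all $x,y$ exactly when $f$ is lattice submodular, and which is additive in $f$, i.e. $D_{f_1 + f_2} = D_{f_1} + D_{f_2}$. Two observations drive everything. First, if $x$ and $y$ are comparable (one dominates the other componentwise) then $\{\min(x,y),\max(x,y)\} = \{x,y\}$ and $D_f(x,y) = 0$ for \emph{every} $f$, so only incomparable pairs need attention. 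Second, the lattice $\prod_i \{0,\dots,k_i-1\}$ is finite, so the quantity $M = \max_{x,y}\bigl(-D_v(x,y)\bigr)$ is a finite real number.

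Next I would exhibit an explicit function whose defect is uniformly bounded below on incomparable pairs. The candidate is $g_0(x) = -\sum_{i<j} x_i x_j$. Each summand $-x_i x_j$ is submodular: a direct computation of $D_{-x_i x_j}(x,y)$ shows it equals $0$ unless coordinates $i$ and $j$ \emph{cross} (one of $x_i,x_j$ exceeds its counterpart in $y$ while the other is smaller), in which case it equals $|x_i - y_i|\cdot|x_j - y_j| \ge 1$ by integrality. Summing over pairs, $g_0$ is submodular, and the key point is that whenever $x,y$ are incomparable there is at least one crossing pair $(i,j)$, so that single term forces $D_{g_0}(x,y) \ge 1$ while every other term remains nonnegative.

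I would then set $g = \alpha\, g_0$ with $\alpha \ge \max(M,0)$, so that $g$ is submodular and $D_g(x,y) = \alpha\, D_{g_0}(x,y) \ge \alpha \ge M$ on every incomparable pair. By additivity, $D_{v+g}(x,y) = D_v(x,y) + D_g(x,y) \ge D_v(x,y) + M \ge 0$ on incomparable pairs, and $D_{v+g}(x,y) = 0$ on comparable ones, so $v+g$ is lattice submodular. Since $g$ is submodular as well, $v = (v+g) - g$ is the desired difference of two submodular functions.

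The routine calculations are the defect evaluation for a single product term and the bookkeeping over coordinate pairs. The one place that needs care — and the main conceptual step — is establishing that every incomparable pair produces a crossing pair contributing a defect of at least $1$ \emph{uniformly}, independent of $v$; this is exactly what lets a single fixed $g_0$ (suitably scaled) dominate the arbitrary negative defects of $v$ across the whole finite lattice. A secondary point to verify is that the argument is insensitive to the differing bounds $k_i$ per coordinate, which it is, since the construction uses only the products $x_i x_j$ and the finiteness of the domain.
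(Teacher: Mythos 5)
Your proof is correct, and it follows the same overall strategy as the paper---add a suitably scaled, strictly submodular function to $v$ so that the sum becomes submodular, then write $v$ as (sum) minus (added function)---but the verification route is genuinely different. The paper takes $g$ to be an \emph{arbitrary} strictly submodular function and checks submodularity of $f = v + \frac{n}{m}g$ locally, through the mixed second differences $D_{i,j}(f)(x) = f(x+e_i+e_j)-f(x+e_i)-f(x+e_j)+f(x)$ for $i \neq j$, where $n$ is the largest $|D_{i,j}(v)|$ and $m$ the smallest $|D_{i,j}(g)|$; this implicitly invokes the standard fact that nonpositivity of these pairwise second differences is equivalent to lattice submodularity, which the paper does not prove. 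You instead work directly with the global defect $D_f(x,y) = f(x)+f(y)-f(\min(x,y))-f(\max(x,y))$ from the definition in Equation $(2)$: you fix the concrete function $g_0(x) = -\sum_{i<j} x_i x_j$, show by the crossing-pair argument that its defect is at least $1$ (by integrality) on every incomparable pair and $0$ otherwise, and scale by the worst global violation $M$ of $v$. Your version is more elementary and self-contained, since it never needs the local characterisation of submodularity on the lattice, at the cost of being tied to one particular perturbing function; the paper's version allows any strictly submodular $g$ (and it remarks after Lemma $4.4$ that different choices of $g$ can yield more meaningful decompositions), and indeed the explicit $g$ used there, $\sum_i x_i^2 - 4\sum_{i\neq j} x_i x_j$, contains essentially your $g_0$ as its coupling part, with the separable terms $x_i^2$ contributing nothing to either notion of defect.
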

\begin{proof}
	Let $g$ be any \textit{strictly} submodular function on the same lattice, that is, one with all second differences over pairs $i \neq j$ strictly less than zero. We shall denote by $m$ the minimum absolute value of the second differences over all pairs $i \neq j$ and $x$, that is:
	\begin{equation}
	m = \min_{i \neq j, x} |g(x + e_i + e_j) - g(x+e_i) - g(x+e_j) + g(x)|.
	\end{equation}
	As all second differences of $g$ are less than zero as its submodular, note for any other difference $m'$ we have $\frac{m'}{m} \leq -1$. We shall additionally define $n$ to be the maximum absolute value of the second difference of $v$ under the same constraint:
	\begin{equation}
	n = \max_{i \neq j, x}  |v(x + e_i + e_j) - v(x+e_i) - v(x+e_j) + v(x)|.
	\end{equation}
	We can now form the function $f$:
	\begin{equation}
		f(x) = v(x) + \frac{n}{m} g(x).
	\end{equation}
	We claim that $f$ is submodular, which will give our result, as a submodular function scaled by a constant is submodular. To do this, take any of the second differences over pairs $i \neq j$ of the function $f$. Denote the second difference as $D_{i,j}(f)(x)$. We find:
	\begin{align}
	D_{i,j}(f)(x) &= D_{i,j}(v)(x) + \frac{n}{m} D_{i,j}(g)(x) \\
	&\leq D_{i,j}(v)(x) - n \\
	&\leq 0,
	\end{align}
	as required.
\end{proof}

Of course, $n,m$ is in general hard to find, so another question that may be asked regarding this problem is how difficult it is to find the functions $f, g$ corresponding to some $v$. That is answered with this result, extending \cite{iyer2012algorithms}:
\begin{lem}
	Suppose that we know $n$, or a lower bound on $n$ as in the previous lemma. Then given a function $v$, we can construct submodular functions $f, g$ such that $v(x) = f(x) - g(x)$.
\end{lem}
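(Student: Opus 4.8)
The plan is to make the existence argument of the previous lemma fully constructive by committing to one explicit strictly submodular function $g$ whose mixed second differences are known exactly, so that the only quantity we must estimate about $v$ is a single scalar. Recall that the previous lemma produces the pair $f = v + \frac{n}{m}g$ and $g' = \frac{n}{m}g$, both submodular, with $v = f - g'$, where the freedom lies in the choice of the strictly submodular $g$ and the resulting constant $m = \min_{i\neq j,\,x}|D_{i,j}(g)(x)|$. First I would remove that freedom by fixing $g$ once and for all, turning the recipe into something that can be written down from $v$ and a single bound.

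A convenient choice is the quadratic
\begin{equation}
g(x) = -\tfrac{1}{2}\Big(\sum_{i=1}^n x_i\Big)^2,
\end{equation}
or equivalently $g(x) = -\sum_{i<j} x_i x_j$, which differs only by a term depending on each coordinate separately and hence has identical mixed second differences. A direct expansion shows that for every $i \neq j$ and every $x$ one has $D_{i,j}(g)(x) = -1$; in particular $g$ is strictly submodular with $m = 1$. With this choice the recipe collapses to $f = v + n\,g$ and $g' = n\,g$, and since $g$ is submodular and $n > 0$, the scaled function $g'$ is again submodular. It then remains only to verify that $f$ is submodular.

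For this I would invoke the characterisation that submodularity on the product of chains $\prod_i\{0,\ldots,k_i-1\}$ is equivalent to nonpositivity of all mixed second differences, the diagonal second differences being unconstrained. Computing directly, $D_{i,j}(f)(x) = D_{i,j}(v)(x) + n\,D_{i,j}(g)(x) = D_{i,j}(v)(x) - n$, so $f$ is submodular precisely when $n$ dominates the largest mixed second difference of $v$ over all $i \neq j$ and all $x$. Taking $n$ to be the maximum absolute mixed second difference of $v$ as in the previous lemma, or any upper bound on it, guarantees $D_{i,j}(v)(x) \leq n$ and hence $D_{i,j}(f)(x) \leq 0$, completing the construction $v = f - g'$.

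The main obstacle is not the algebra, which is routine, but getting the hypothesis on $n$ pointed in the correct direction: the construction requires a value that \emph{upper}-bounds the largest positive mixed second difference of $v$, since a value strictly below it would leave some $D_{i,j}(f)(x)$ positive and break submodularity. The operative assumption is therefore knowledge of $n$ or of an upper bound on it. The practical content of the lemma is that one never needs to compute $n$ exactly, which would require an expensive scan over the whole lattice; any valid upper bound, such as one read off from a closed form or from problem structure as in the cases listed after Lemma $2.2$, feeds directly into the explicit pair $\big(v + n g,\; n g\big)$.
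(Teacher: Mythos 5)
Your proof is correct and takes essentially the same route as the paper: fix one explicit strictly submodular quadratic whose mixed second differences are known exactly (the paper uses $g(x) = x_1^2 + \ldots + x_p^2 - 4\sum_{i\neq j} x_ix_j$; you use $-\sum_{i<j}x_ix_j$, normalised so $m=1$) and then apply the scaling construction of the previous lemma, checking $D_{i,j}(f)(x) = D_{i,j}(v)(x) - n \leq 0$. Your closing observation is moreover a genuine correction to the paper's wording: the hypothesis must be knowledge of $n$ or an \emph{upper} bound on it, since a constant strictly below the largest mixed second difference of $v$ would leave some $D_{i,j}(f)(x)$ positive and break submodularity.
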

\begin{proof}
	Consider the submodular function on $\prod_i \{0,\ldots,k_i-1\}$:
	\begin{equation}
		g(x_1, \ldots, x_p) = x_1^2 + \ldots + x_p^2 - 4\sum_{i\neq j} x_ix_j
	\end{equation}
We can clearly verify here that $g$ is strictly submodular, and we have $m =-4$. Thus if we know a lower bound on $n$ we can form $f$ in the manner of Lemma $4.3$ as required.
\end{proof}
Note that the choice of $g$ in the proof above is not special, it simply needs to be submodular and we need to know its $m$. Thus depending on the particular problem $v(x)$, different choices of $g$ may give nice or meaningful decompositions.

To do complexity analysis, we will be working on an $\epsilon$-approximate version of the algorithm, introduced in \cite{orlin2004approximate}. This means that we will only proceed to step $t+1$ if we have $v(x^{t+1}) \leq (1+\epsilon)v(x^t)$. The reason we do this comes from \cite{iyer2012algorithms}, as we know this problem for set functions is PLS-complete. We now consider the complexity of this procedure. The complexity in minimising the lattice version can be found from a lemma whose statement and proof can be adapted directly from \cite{iyer2012algorithms}:

\begin{lem}
	The $\epsilon$-approximate version of each algorithm has a worst case complexity of $\mathcal{O}(\frac{\log(|M|/|m|)}{\epsilon}T)$, where $T$ is the complexity of the iterative step, $M = f'(0) - g'(k_{max})$ and $m = v(x^1)$.
\end{lem}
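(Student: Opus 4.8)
The plan is to run the standard approximate-local-search potential argument of \cite{orlin2004approximate}, adapted exactly as in the set-function treatment of \cite{iyer2012algorithms}, taking the objective value $v$ itself as the potential. The number of retained iterations will be bounded by a logarithm of the range over which $v$ can move, divided by $\epsilon$, and the total running time is then this iteration count multiplied by the per-step cost $T$. So the lemma reduces to two ingredients: (a) showing that every step the algorithm keeps improves $v$ by a fixed multiplicative factor, and (b) pinning down the extreme values $v$ can take.

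First I would recall the monotone-decrease computation established just before Algorithm~1, which gives $v(x_{t+1}) \le v(x_t)$ for the exact routines. The $\epsilon$-approximate variant advances only when $v(x^{t+1}) \le (1+\epsilon)v(x^t)$, so on every retained step the magnitude of the (negative) objective grows by at least a factor $(1+\epsilon)$, i.e. $|v(x^{t+1})| \ge (1+\epsilon)|v(x^t)|$. Chaining this from the starting value $m = v(x^1)$ yields $|v(x^t)| \ge (1+\epsilon)^{t-1}|m|$.

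Next I would use Lemma~4.2 together with the remark following it to control the range. That remark shows $\min_x v(x) < 0$, so the relevant regime is genuinely $v < 0$, and $M = f'(0) - g'(k_{max})$ plays the role of the extreme value $v$ can reach, giving $|v(x^t)| \le |M|$ throughout. Combining this with the chained inequality gives $(1+\epsilon)^{t-1}|m| \le |M|$, hence $t-1 \le \log_{1+\epsilon}(|M|/|m|)$. Using $\log(1+\epsilon) \ge \epsilon/(1+\epsilon)$ then converts this to $t = \mathcal{O}\!\left(\tfrac{\log(|M|/|m|)}{\epsilon}\right)$ retained iterations, and multiplying by the per-iteration complexity $T$ (submodular minimisation, submodular maximisation, or modular minimisation for Algorithms~1--3 respectively, as quantified in Section~3) produces the claimed $\mathcal{O}\!\left(\tfrac{\log(|M|/|m|)}{\epsilon}T\right)$ bound.

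The main obstacle I anticipate is the sign bookkeeping rather than any deep estimate. The multiplicative-decrease reading of the rule $v(x^{t+1}) \le (1+\epsilon)v(x^t)$ is a genuine improvement only while $v$ is negative, so I would have to argue cleanly that the algorithm operates in the regime $m, M < 0$ with $|m| \le |M|$, so that the ratio $|M|/|m|$ is well defined and at least one and the geometric chain is valid. This is precisely where Lemma~4.2 and its remark are needed, and it is the only point where the argument departs nontrivially from the set-function proof of \cite{iyer2012algorithms}; everything else is the same geometric accounting carried over verbatim.
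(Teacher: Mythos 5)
Your proposal is correct and is essentially the paper's own argument: the paper gives no explicit proof, deferring to the statement that it ``can be adapted directly from'' the set-function case of \cite{iyer2012algorithms}, and your reconstruction --- the geometric $(1+\epsilon)$ improvement of the negative objective on each retained step, the range pinned between $m = v(x^1)$ and the additive lower bound $M$ supplied by Lemma~4.2, giving $\mathcal{O}\!\left(\tfrac{\log(|M|/|m|)}{\epsilon}\right)$ iterations each of cost $T$ --- is precisely that adaptation, including the sign bookkeeping that replaces $f(\emptyset) - g(V)$ by $f'(0) - g'(k_{max})$ on the lattice.
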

We also note that as described earlier, the ModMod procedure has the lowest complexity at each iteration.

\subsection{Constrained Optimisation}
For the SubSup procedure, we are minimising a submodular function with some constraint at every iteration. However, we know that this is hard, and also hard to approximate \cite{svitkina2011submodular} for the set function case, and so will be for ours also. Therefore this algorithm is not suitable for using constraints.

We know that for some simple constraints such as cardinality, knapsack and polymatroid, the submodular maximisation problem has been studied on the integer lattice for monotone functions \cite{soma2018maximizing}. So at least for some subclasses, we can use the SupSub procedure to optimise under constraints.

While to the best of our knowledge no-one has explicitly written algorithms for modular minimisation on the integer lattice, we know it is easy and can be done exactly at least for cardinality constraints, where we can just enumerate all marginal gains and take the lowest in each variable. We will then take the $B$ lowest, where our cardinality constraint says $x(V) \leq B$. 

In the set function case, we can optimise easily and exactly over a variety of other constraints \cite{iyer2012algorithms}. However, here our separability also gives us linearity, something we lose on our lattice as we separate to arbitrary discrete functions. It is worth looking into more of these constrained optimisation problems.



\section{Conclusion}
In \cite{iyer2012algorithms}, the authors studied the problem of minimising the difference between two submodular set functions. We have extended this to the case of general lattice submodular functions, without the DR requirement. Additionally, we note that via discretisation, our method can be applied to continuous functions also.

In performing the majorisation-minimisation technique, we extended an earlier bound from \cite{iyer2012algorithms} for the upper bound, which is valid for DR-submodular functions, and used a decomposition for submodular functions into DR-submodular functions plus a modular function to take advantage of it. For the lower bound, we used the method of computing the subgradient as in \cite{bach2016submodular}. The result of this greedy algorithm gives us our lower bound, and did not require $f$ to be DR-submodular.

After that we formally stated our algorithms, performed some complexity analysis, and analysed other theoretical properties of it. One clear extension on this work would be finding an alternative lower bound that can be used on continuous functions, as our upper bound can already be used in this context.
\newpage
\bibliographystyle{plain}
\bibliography{DifferenceOfSubmodularDraft}

\end{document}